\theoremstyle{thmstyleone}%
\newtheorem{theorem}{Theorem}
\newtheorem{lemma}{Lemma}
\newtheorem{corollary}{Corollary}
\newtheorem{definition}{Definition}%
\theoremstyle{thmstyletwo}%
\theoremstyle{thmstylethree}%
\begin{document}

\title[A P Algorithm to sovle ARP]{A Polynomial-time Algorithm to Solve the Airplane Refueling Problem: the Sequential Search Algorithm}


\author*[1]{\fnm{Jinchuan} \sur{Cui}}\email{cjc@amss.ac.cn}

\author[2]{\fnm{Xiaoya} \sur{Li}}\email{xyli@amss.ac.cn}


\affil[1,2]{\orgdiv{Academy of Mathematics and Systems Science}, \orgname{Chinese Academy of Sciences}, \orgaddress{\street{No. 55 Zhongguancun East Road}, \city{Beijing}, \postcode{100190}, \country{China}}}



\abstract{Airplane refueling problem is a nonlinear unconstrained optimization problem with $n!$ feasible solutions. Given a fleet of $n$ airplanes with mid-air refueling technique, the question is to find the best refueling policy to make the last remaining airplane travel the farthest. In order to solve airplane refueling problem, we proposed the definition of sequential feasible solution by employing the refueling properties of data structure. We proved that if an airplane refueling instance has feasible solutions, it must have sequential feasible solutions; and the optimal feasible solution must be the optimal sequential feasible solution. So we need to numerate all the sequential feasible solutions to get an exact algorithm. We proposed the sequential search algorithm which consists of two steps, the first step of which aims to seek out all of the sequential feasible solutions, and the second step aims to search for the maximal sequential feasible solution by bubble sorting all of the sequential feasible solutions. We observed that the number of the sequential feasible solutions will change to grow at a polynomial rate when the input size of $n$ is greater than an inflection point $N$. Then we proved that the sequential search algorithm is a polynomial-time algorithm to solve the airplane refueling problem. Moreover, we built an efficient computability scheme, according to which we could forecast within a polynomial time the computational complexity of the sequential search algorithm that runs on any given airplane refueling instance. Thus we could provide a computational strategy for decision makers or algorithm users by considering with their available computing resources.}

\keywords{Airplane refueling problem (ARP), Sequential search algorithm (SSA), Polynomial-time algorithm, Inflection point, Efficient computability scheme}



\maketitle

\section{Introduction}\label{sec1}

The Airplane Refueling Problem (ARP) was raised by Woeginger \cite{woeginger10} from a math puzzle problem \cite{puzzle58}. Suppose there are $n$ airplanes referred to $A_1, \cdots, A_n$, each $A_i$ can carry $v_i$ tanks of fuel, and consumes $c_i$ tanks of fuel per kilometer for $1\leqslant i \leqslant n$. The fleet starts to fly together to a same target at a same rate without getting fuel from outside, but each airplane can refuel to other remaining airplanes instantaneously during the trip and then be dropped out. The goal is to determine a drop out permutation $\pi = (\pi(1), \cdots, \pi(n))$ that maximize the flight distance by the last remaining airplane.

Previous research on ARP centralized on its complexity analysis and its exact algorithm \cite{hohn13,vasquez15,gamzu19,lijs19}. Related research also focused on equivalent problems of ARP such as the $n$-vehicle exploration problem \cite{lixy09,yu18,zhang21} and the single machine scheduling problem \cite{hohn13}. V{\'a}squez \cite{vasquez15} studied the structural properties of ARP such as connections between local precedence and global precedence. Iftah and Danny \cite{gamzu19} proposed a fast and easy-to-implement pseudo polynomial time algorithm that attained a Polynomial Time Approximation Scheme (PTAS) for ARP. We mainly focused on the performance of an algorithm on instances with large input size, and cared about that how long will a given algorithm take to solve real-life instances. We tried to find a superior algorithm that is able to handle the large size problem in any reasonable amount of time \cite{cook98}, and to explore the possibility of a fast exact algorithm which is efficient on ARP instances.

Li et al. \cite{lijs19} put forward a fast exact algorithm for the first time by searching for all the feasible solutions satisfied with some necessary conditions. They run the algorithm on some large scale of ARP instances and got relatively efficient results than previous algorithms. However, the authors did not investigate the theoretical computational complexity the fast exact algorithm will perform on worst case, and did not demonstrate that how fast and to what degree will the fast algorithm be when perform it on larger scale of ARP instances. Is it an exponential-time algorithm, or is it probably a polynomial-time algorithm? In this paper, we proposed the definition of the sequential feasible solution at first, and then we put forward the sequential search algorithm (SSA) by bubble sorting all of the sequential feasible solutions. The computational complexity of SSA depends on the number of sequential feasible solutions for any given ARP instance. Then we found that SSA has a characteristic that it will run in polynomial time on ARP instances when the input size is greater than an inflection point. According to the theoretical analysis on the algorithmic upper bound, we explained why there exists an inflection point for the algorithmic complexity. Moreover, we proved that the time complexity of SSA raises at a polynomial rate when the input size of $n$ is greater than an inflection point $N$.

At last we also proposed an efficient computability scheme to predict the sequential search algorithmic complexity for any given ARP instance. The idea of efficient computability is inspired by the following literature. In \cite{yu18}, the authors provided a conceptual mechanism of efficient computation on the $n$-vehicle exploration problem, which could acquire reasonable computing cost analysis to help decision makers choose corresponding algorithm. In book \cite{tardos06}, the authors mentioned that it is possible to quantify some precise senses in which an instance may be easier than the worst case, and to take advantage of these situations when they occur. They also pointed out that some "size" parameters has an enormous effect on the running time because an input with "special structure" can help us avoid many of the difficulties that can make the worst case intractable. In \cite{walteros20}, the authors provided a worst-case explanation for the phenomenon that why some real-life maximum clique instances are not intractable. They claimed that real-life instances of maximum clique problem often have a small clique-core gap, however real hard instances are still hard no matter the input size is large or small.

\section{Preliminary}
\label{sec:background}

We consider a permutation order $\pi$ and its related sequence $A_{\pi(1)} \Rightarrow A_{\pi(2)} \Rightarrow \cdots \Rightarrow A_{\pi(n)}$ (see \cref{fig:1}), where $A_{\pi(i)}$ refuels to $A_{\pi(j)}$ for any $i < j$. Let $S_{\pi}=\sum\limits_{i=1}^n{x_{\pi(i)}}$ denotes the total distance that the $n$ airplanes can approach, and $x_{\pi(i)}$ denotes the segmented part of distances that $A_{\pi(i)}$ travels farther than $A_{\pi(i-1)}$, which is also the flight distance that $A_{\pi(i)}$ contributes to $S_{\pi}$ separately. Then ARP is described as a scheduling problem, which aims to find a drop out permutation $\pi = (\pi(1), \pi(2), \cdots, \pi(n))$ that maximizes the following $S_{\pi}$. Since each $S_{\pi}$ corresponds to a permutation order of $n$ airplanes, there are totally $n!$ of feasible solutions need to be compared.

\begin{equation}
S_{\pi} = \frac{v_{\pi(1)}}{c_{\pi(1)}
\cdots + c_{\pi(n)}} + \cdots + \frac{v_{\pi(n)}}{c_{\pi(n)}}\label{eqt1}
\end{equation}

\begin{figure}[h]
\centering
\includegraphics[width=0.65\textwidth]{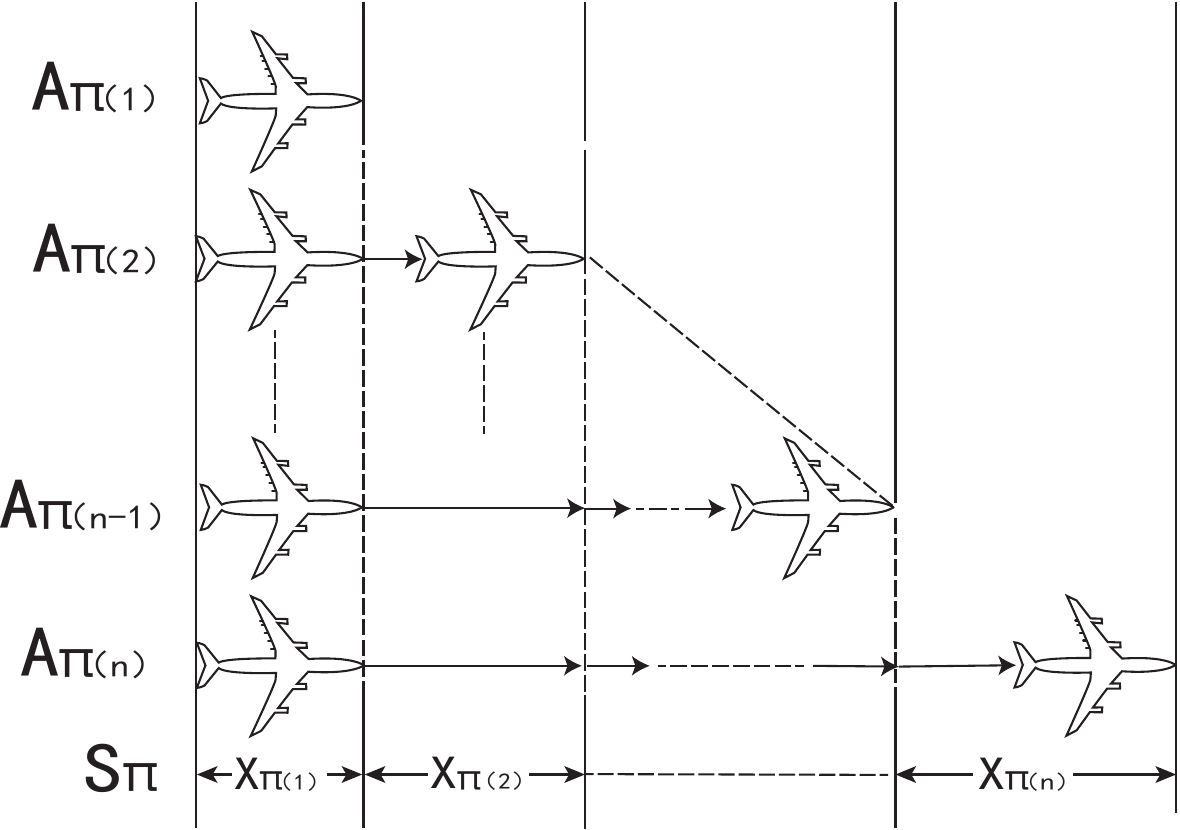}
\caption{Airplane refueling problem scheme.}\label{fig:1}
\end{figure}

Let $\mathcal{C}$ represents the cumulative sum of fuel consumption rates in an order $\pi = (\pi(1), \pi(2), \cdots, \pi(n))$, then $\mathcal{C}_{\pi(n)} = 0$, and $\mathcal{C}_{\pi(l)} = \sum\limits_{t=l+1}^nc_{\pi(t)}$ for any $l < n$. 

\section{Overall strategy}
\label{sec:org}

We would like to give the definition of polynomial-time algorithm \cite{garey79} before we proposed our algorithm.

\begin{definition}\label{def:2} A {\bf polynomial-time algorithm} is defined to be an algorithm whose time complexity function is $O(p(n))$ for some polynomial function $p(n)$ for all values of $n \geqslant 0$.\end{definition}

In \Cref{def:2}, the condition of $n \geqslant 0$ means that a polynomial-time algorithm is polynomial in time no matter the problem's input size is large or small. Normally, an algorithm runs in exponential time on small size of instances, it should be exponential-time on larger input size of instances. Thus, in studying the complexity of an algorithm, instances with larger inputs are of more priorities than other kinds because the larger inputs determine the applicability of the algorithm \cite{christosbook}.

Here, we mainly focused on seeking for a polynomial-time algorithm to solve ARP. We proposed SSA by numerating all of the sequential feasible solutions. The complexity of SSA depends on the number of sequential feasible solutions to the given ARP instance. We proved that for worst case of ARP instances, the number of its sequential feasible solutions is upper bounded by $2^{n-2}$. Then we focused on the following two major challenges. The first one is that will the running time of SSA decline down to polynomial-time when the input size of $n$ gets to be sufficiently large? The other challenge is that how can we predict the specific computational complexity of any given ARP instance in polynomial time before running SSA on it?

The answer to the first question is positive. For any ARP instance, if each airplane's single flight distance is limited in an interval $[L, S]$, then there must exist an constant $N$ such that SSA runs in polynomial time when the inputs size of $n$ is greater than an inflection point $N$ which does not depend on $n$. So we proved that SSA is a polynomial-time algorithm to solve ARP when its input size of $n$ is greater than an inflection point $N$.

\begin{definition}\label{def:3} When we run an algorithm on a problem, at first the time complexity grows in exponential time when the input size is small. But the time complexity of the algorithm will changes to grow in polynomial time when the input size is greater than $N$. For such situation, the point $N$ is defined as an {\bf inflection point}.\end{definition}

Besides, when $n$ is less than $N$, the number of sequential feasible solution is upper bounded by $2^{n-2}$, which is still less than $n^m$ for $m = N/2$. Therefore we proved that SSA is a polynomial-time algorithm to solve ARP.
%
%

The key point to answer the second question is to predict the inflection point $N$ for any given ARP instance in polynomial time. We proposed an algorithm to estimate the $N$ for worst case, and we improved the algorithm for general case by using heuristic method to attain the computational complexity in the design of the efficient computability scheme.

In \cref{sec:complexity}, we proposed the definition of sequential feasible solution, and constructed SSA by bubble sorting all of the sequential feasible solutions. We sharpened the upper bound of the number of sequential feasible solutions to $2^{n-2}$. In \cref{sec:palgorithm}, by exploring the computational complexity of ARP instances from a dynamic perspective, we found that the computational complexity of SSA grows at a slowing down rate when the input size of $n$ gets to be greater than an inflection point. In \cref{sec:efficom}, our efforts were devoted to construct the efficient computability scheme. We proposed a heuristic algorithm to estimate an inflection point $N = 2m$ for general case of ARP instances. Then we explained how to use the efficient computability scheme to predict computational complexity before we choose a proper algorithm for any given ARP instance.

\section{Sequential feasible solutions and the complexity upper bound}
\label{sec:complexity}

\subsection{Definition of sequential feasible solution}

A special case of $n$-vehicle exploration problem \cite{lixy09} is extended to ARP as follows.

\begin{theorem}\label{thm:1} In an ARP instance, if
\begin{displaymath}
     \frac{v_1}{c_1^2}\leqslant\frac{v_2}{c_2^2}\leqslant\cdots\leqslant\frac{v_{n-1}}{c_{n-1}^2}\leqslant\frac{v_n}{c_n^2}
\end{displaymath} and
\begin{displaymath}
\frac{v_1}{c_1}\leqslant\frac{v_2}{c_2}\leqslant\cdots\leqslant\frac{v_{n-1}}{c_{n-1}}\leqslant\frac{v_n}{c_n},
\end{displaymath}
then the optimal sequence is $A_1\Rightarrow
A_2\Rightarrow\cdots \Rightarrow A_n$.\end{theorem}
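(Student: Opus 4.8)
The plan is to prove optimality of the identity ordering by an exchange (adjacent-transposition) argument: I will show that any ordering containing an ``inversion'' — a pair of airplanes whose indices appear in decreasing order at adjacent positions — can be improved, or at least not worsened, by swapping them into increasing-index order. Since repeatedly removing adjacent inversions (bubble-sort style) transforms any permutation into the identity $A_1 \Rightarrow \cdots \Rightarrow A_n$ in finitely many steps, and no step decreases $S_\pi$, the identity must attain the maximum.

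First I would isolate the effect of a single adjacent swap. Fix two airplanes $A_i, A_j$ with $i < j$ sitting at consecutive positions, and let $t \geq 0$ denote the total consumption rate of all airplanes dropped out after both of them. Only the two terms of $S_\pi$ associated with $A_i$ and $A_j$ change under the swap; writing $S_{ij}$ for their contribution with $A_i$ first and $S_{ji}$ with $A_j$ first, I would compute
\begin{displaymath}
S_{ij} - S_{ji} = \left(\frac{v_i}{c_i+c_j+t} + \frac{v_j}{c_j+t}\right) - \left(\frac{v_j}{c_i+c_j+t} + \frac{v_i}{c_i+t}\right).
\end{displaymath}
The shared denominator $c_i+c_j+t$ cancels the position-one terms, and regrouping yields the compact form
\begin{displaymath}
S_{ij} - S_{ji} = \frac{1}{c_i+c_j+t}\left(\frac{v_j c_i}{c_j+t} - \frac{v_i c_j}{c_i+t}\right).
\end{displaymath}

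The crux is then to read off the sign from the two hypotheses. Clearing the positive denominators, $S_{ij} - S_{ji} \geq 0$ is equivalent to $v_j c_i(c_i+t) \geq v_i c_j(c_j+t)$, i.e.\ to
\begin{displaymath}
(v_j c_i^2 - v_i c_j^2) + (v_j c_i - v_i c_j)\, t \geq 0.
\end{displaymath}
Here the constant term is nonnegative exactly because $v_i/c_i^2 \leq v_j/c_j^2$, and the coefficient of $t$ is nonnegative exactly because $v_i/c_i \leq v_j/c_j$; since $t \geq 0$, the two hypotheses together force the whole expression to be nonnegative. This is the key observation — the two monotonicity assumptions correspond precisely to the constant and the linear-in-$t$ parts of the swap difference, and neither alone would suffice.

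Finally I would assemble the global argument. Because the hypotheses are full chains, for every $i<j$ both inequalities $v_i/c_i^2 \leq v_j/c_j^2$ and $v_i/c_i \leq v_j/c_j$ hold, so every adjacent inversion encountered during the sorting satisfies the premises of the local step above and may be removed without decreasing $S_\pi$. The point to be careful about is simply that the two sorting criteria are mutually consistent — both rank the airplanes as $A_1, \ldots, A_n$ — which is guaranteed by assumption, so no conflict arises between the constant-term and linear-term requirements during the exchanges. I expect the only real obstacle to be the algebraic regrouping that produces the compact difference formula; once that identity is in hand, matching its two pieces to the two hypotheses is immediate.
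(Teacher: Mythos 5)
Your proof is correct. Note that the paper itself does not prove \cref{thm:1} at all --- it imports the result from the cited $n$-vehicle exploration work --- but your adjacent-exchange argument is a complete, self-contained derivation, and its local step is exactly the pairwise comparison the paper carries out immediately after stating the theorem: your condition $\frac{v_j c_i}{c_j+t} \geq \frac{v_i c_j}{c_i+t}$ is, after dividing by $c_i c_j$, precisely the paper's comparison of the relative distance factors $\varphi(A_i,t)$ and $\varphi(A_j,t)$. The algebraic identity for $S_{ij}-S_{ji}$ checks out, the decomposition into a constant term governed by $v/c^2$ and a linear-in-$t$ term governed by $v/c$ is the right way to see why both hypotheses are needed, and the bubble-sort closure (only the two swapped terms change, since the trailing sum $t$ and all other partial sums are invariant under an adjacent transposition) is sound.
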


For general case of ARP instances, we randomly choose two neighbored airplanes $A_{i-1}$ and $A_{i}$ in $\pi = (1, \cdots, n)$.

Suppose $\mathcal{C}_i = \sum\limits_{t=i+1}^nc_{\pi(t)}$.

If $A_{i-1} \Rightarrow A_i$, then
\begin{displaymath}x_{\pi(i-1)} + x_{\pi(i)} = \frac{v_{i-1}}{c_{i-1} + c_i + \mathcal{C}_i} + \frac{v_i}{c_i + \mathcal{C}_i},\end{displaymath}

otherwise if $A_i \Rightarrow A_{i-1}$, then
\begin{displaymath}x_{\pi(i)} + x_{\pi(i-1)} = \frac{v_i}{c_i + c_{i-1} + \mathcal{C}_i} + \frac{v_{i-1}}{c_{i-1} + \mathcal{C}_i}.\end{displaymath}.

Consequently, if $\frac{v_i}{c_i \times (c_i + \mathcal{C}_i)} > \frac{v_{i-1}}{c_{i-1} \times (c_{i-1} + \mathcal{C}_i)}$, then $A_{i-1} \Rightarrow A_i$ is a better sequence; otherwise if $\frac{v_i}{c_i \times (c_i + \mathcal{C}_i)} \leqslant \frac{v_{i-1}}{c_{i-1} \times (c_{i-1} + \mathcal{C}_i)}$, then $A_i \Rightarrow A_{i-1}$ is a better sequence.

\begin{definition}\label{def:1} Given current cumulative sum of fuel consumption rates $C_o$, the {\bf relative distance factor} of $A_i$ is defined as $\varphi(A_i, \mathcal{C}_o) = \frac{v_i}{c_i \times (c_i + \mathcal{C}_o)}$. \end{definition}

The following \Cref{cor:1} is a corollary of \Cref{thm:1} by considering the definition of relative distance factor.

\begin{corollary}\label{cor:1} For an ARP instance with a given cumulative sum of fuel consumption rates $\mathcal{C}_o$, if
\begin{displaymath}
\frac{v_1}{c_1}\leqslant\frac{v_2}{c_2}\leqslant\cdots\leqslant\frac{v_{n-1}}{c_{n-1}}\leqslant\frac{v_n}{c_n}
\end{displaymath}
and
\begin{displaymath}
\frac{v_1}{c_1(c_1 + \mathcal{C}_o)}\leqslant\frac{v_2}{c_2(c_2 + \mathcal{C}_o)}\leqslant\cdots\leqslant\frac{v_{n-1}}{c_{n-1}(c_{n-1} + \mathcal{C}_o)}\leqslant\frac{v_n}{c_n(c_n + \mathcal{C}_o)},
\end{displaymath}
then the optimal sequence is $A_1\Rightarrow
A_2\Rightarrow\cdots \Rightarrow A_n$.\end{corollary}

\begin{definition}\label{def:4} A sequence $\pi = (\pi(1), \pi(2), \cdots, \pi(n))$ is called a {\bf sequential feasible solution}, if for each pair of airplanes $A_{\pi(i)}$ and $A_{\pi(j)}$ for $i < j$ in $\pi$, at least one of the following two inequalities holds.

\begin{equation}\label{eqt2}\frac{v_{\pi(i)}}{c_{\pi(i)}(c_{\pi(i)} + \mathcal{C}_{\pi(j)})} \leqslant \frac{v_{\pi(j)}}{c_{\pi(j)}(c_{\pi(j)} + \mathcal{C}_{\pi(j)})}\end{equation}

\begin{equation}\label{eqt3}\frac{v_{\pi(i)}}{c_{\pi(i)}(c_{\pi(i)} + \mathcal{C}_{\pi(i)})} \leqslant \frac{v_{\pi(j)}}{c_{\pi(j)}(c_{\pi(j)} + \mathcal{C}_{\pi(i)})} \end{equation}

Where $\mathcal{C}_{\pi(n)} = 0$, and $\mathcal{C}_{\pi(l)} = \sum\limits_{t=l+1}^nc_{\pi(t)}$ for any $l < n$.
\end{definition}

\begin{lemma}\label{lem:1}For each pair of airplanes $A_i$ and $A_j$ with $v_i /c_i^2 > v_j / c_j^2$ and $v_i / c_i < v_j / c_j$.
\begin{itemize}
\item[(1)] $v_i < v_j$, and $c_i < c_j$.

\item[(2)] For any $\mathcal{C}_o > 0$, $\frac{v_i}{c_i + \mathcal{C}_o}<\frac{v_j}{c_j + \mathcal{C}_o}$.\end{itemize}
\end{lemma}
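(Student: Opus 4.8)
The plan is to clear denominators and argue purely in terms of the products $v_i c_j$, $v_j c_i$ and the powers $c_i^2$, $c_j^2$, using throughout that all of $v_i, v_j, c_i, c_j$ are strictly positive (they are fuel capacities and consumption rates). Writing the two hypotheses in cross-multiplied form gives $v_i c_j^2 > v_j c_i^2$ from $v_i/c_i^2 > v_j/c_j^2$, and $v_i c_j < v_j c_i$ from $v_i/c_i < v_j/c_j$.

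For part (1), the first target is $c_i < c_j$. The cleanest route I would take is to multiply the two hypotheses together while they point in compatible directions: from $\frac{v_i}{c_i^2} > \frac{v_j}{c_j^2}$ and $\frac{v_j}{c_j} > \frac{v_i}{c_i}$ (both inequalities between positive reals), I multiply left-by-left and right-by-right to get $\frac{v_i v_j}{c_i^2 c_j} > \frac{v_i v_j}{c_i c_j^2}$. Cancelling the common positive factor $v_i v_j$ and cross-multiplying leaves $c_j > c_i$, i.e.\ $c_i < c_j$. I would then feed this back into hypothesis (B): from $v_i c_j < v_j c_i$ together with $c_i < c_j$ one has $v_i c_j < v_j c_i < v_j c_j$, and dividing by $c_j > 0$ yields $v_i < v_j$. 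This establishes both assertions of part (1).

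For part (2), I would reduce the claimed inequality $\frac{v_i}{c_i + C_o} < \frac{v_j}{c_j + C_o}$ (legitimate since both denominators are positive for $C_o > 0$) to its cross-multiplied equivalent $v_i(c_j + C_o) < v_j(c_i + C_o)$, then expand and split it into two pieces: the consumption-rate terms require $v_i c_j < v_j c_i$, which is exactly hypothesis (B), while the $C_o$ terms require $v_i C_o < v_j C_o$, which follows from $v_i < v_j$ (part (1)) and $C_o > 0$. Adding these two strict inequalities gives the desired bound for every $C_o > 0$.

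The argument is elementary, so the only real subtlety---and the step I would flag---is the order of operations: part (2) depends on having $v_i < v_j$ already in hand, so part (1) must come first, and within part (1) the decisive move is multiplying the two oppositely-phrased hypotheses precisely so that the factor $v_i v_j$ cancels and isolates a clean comparison of the consumption rates $c_i$ and $c_j$. Any other combination of the hypotheses leaves mixed terms in both $v$ and $c$ that do not resolve directly.
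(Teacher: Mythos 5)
Your proposal is correct and follows essentially the same route as the paper: both prove $c_i<c_j$ and $v_i<v_j$ first by combining the two hypotheses so that the $v$'s cancel, and then derive part (2) from $v_ic_j<v_jc_i$ together with the positivity of $C_o$ (the paper phrases this as the chain $\tfrac{v_i}{v_j}<\tfrac{c_i}{c_j}<\tfrac{c_i+C_o}{c_j+C_o}$, which is the same algebra as your termwise addition). Your observation that part (2) genuinely requires part (1) first matches the paper's ordering as well.
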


\begin{proof}\begin{itemize}
\item[(1)] Since $\frac{v_i}{v_j} < \frac{c_i}{c_j} < \frac{v_i/c_i}{v_j/c_j} < 1$, it follows that $v_i < v_j$ and $c_i < c_j$.

\item[(2)] Since $c_i < c_j$, we have $\frac{v_i}{v_j} < \frac{c_i}{c_j} < \frac{c_i + \mathcal{C}_o}{c_j + \mathcal{C}_o}<1$, then $\frac{v_i}{c_i + \mathcal{C}_o} < \frac{v_j}{c_j + \mathcal{C}_o}$.\end{itemize}
Therefore for any $\mathcal{C}_o > 0$, $\frac{v_i}{c_i + \mathcal{C}_o}<\frac{v_j}{c_j + \mathcal{C}_o}$. \end{proof}

\begin{lemma}\label{lem:2}For each pair of airplanes $A_i$ and $A_j$ with $\frac{v_i}{c_i^2} > \frac{v_j}{c_j^2}$ and $\frac{v_i}{c_i} < \frac{v_j}{c_j}$. If there exists a sum of  fuel consumption rates $\mathcal{C}_o$, such that $\frac{v_i}{c_i(c_i + \mathcal{C}_o)} < \frac{v_j}{c_j(c_j + \mathcal{C}_o)}$. Then, for any $\mathcal{C} > \mathcal{C}_o$, it follows that $\frac{v_i}{c_i(c_i + \mathcal{C})} < \frac{v_j}{c_j(c_j + \mathcal{C})}$.\end{lemma}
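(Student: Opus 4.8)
The plan is to reduce the claim to the monotonicity of a single auxiliary function of $C$. I would set
\begin{displaymath}
f(C) = \frac{v_j}{c_j(c_j + C)} - \frac{v_i}{c_i(c_i + C)},
\end{displaymath}
so that the hypothesis is exactly $f(C_o) > 0$ and the desired conclusion is $f(C) > 0$ for all $C > C_o$. Since both denominators $c_i(c_i + C)$ and $c_j(c_j + C)$ are strictly positive for $C \geq 0$, the sign of $f(C)$ is controlled entirely by the sign of its numerator after clearing denominators.

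First I would place $f(C)$ over the common positive denominator $c_i c_j (c_i + C)(c_j + C)$ and inspect the numerator
\begin{displaymath}
N(C) = v_j c_i (c_i + C) - v_i c_j (c_j + C) = (v_j c_i^2 - v_i c_j^2) + (v_j c_i - v_i c_j)\,C.
\end{displaymath}
The crucial observation is that $N(C)$ is \emph{affine} in $C$, so its behaviour is governed by just two coefficients. Next I would pin down their signs from the standing hypotheses: the constant term satisfies $v_j c_i^2 - v_i c_j^2 < 0$ because $v_i/c_i^2 > v_j/c_j^2$ forces $v_i c_j^2 > v_j c_i^2$, while the slope satisfies $v_j c_i - v_i c_j > 0$ because $v_i/c_i < v_j/c_j$ forces $v_i c_j < v_j c_i$. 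Hence $N$ is strictly increasing in $C$, negative at $C=0$ and eventually positive. Once $N(C_o) > 0$ — which is the hypothesis $f(C_o) > 0$ rewritten — monotonicity gives $N(C) > N(C_o) > 0$ for every $C > C_o$, and dividing by the positive denominator yields $f(C) > 0$, the required strict inequality.

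I do not expect a genuine obstacle: the argument is a one-variable monotonicity check. The only point demanding care is the bookkeeping of the two sign conditions on the coefficients of $N(C)$, and in particular noticing that \emph{both} hypotheses of the lemma are needed — one to make the slope positive and the other to make the constant term negative — so that $N$ is increasing rather than merely of constant sign. An equivalent multiplicative route, should one prefer it, is to write the ratio $\varphi(A_i,C)/\varphi(A_j,C) = \tfrac{v_i/c_i}{v_j/c_j}\cdot\tfrac{c_j+C}{c_i+C}$ and observe that the constant prefactor lies below $1$ while the remaining factor $\tfrac{c_j+C}{c_i+C} = 1 + \tfrac{c_j-c_i}{c_i+C}$ strictly decreases toward $1$ as $C$ grows (using $c_i < c_j$ from \cref{lem:1}); the ratio is therefore strictly decreasing, so $\varphi(A_i,C_o)/\varphi(A_j,C_o) < 1$ propagates to all $C > C_o$, giving the same conclusion.
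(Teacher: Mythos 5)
Your proposal is correct and is essentially the paper's own argument: the paper's one-line proof rewrites the hypothesis as $\frac{v_i c_j}{v_j c_i} < \frac{c_i + C_o}{c_j + C_o}$ and uses that $\frac{c_i + C}{c_j + C}$ is increasing in $C$ (because $c_i < c_j$), which is exactly the ``multiplicative route'' you sketch at the end, and your primary affine-numerator computation $N(C) = (v_j c_i^2 - v_i c_j^2) + (v_j c_i - v_i c_j)C$ is the same monotonicity fact after clearing denominators. One minor quibble with your commentary: only $v_i/c_i < v_j/c_j$ (the positive slope of $N$) is needed to propagate the inequality from $C_o$ to all $C > C_o$; the condition $v_i/c_i^2 > v_j/c_j^2$ merely makes the constant term negative so that a sign change actually occurs, and is not required for $N$ to be increasing.
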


\begin{proof}Since $\frac{v_ic_j}{v_jc_i} < \frac{c_i + \mathcal{C}_o}{c_j + \mathcal{C}_o} < \frac{c_i + C}{c_j + \mathcal{C}}$, it follows that $\frac{v_i}{c_i(c_i + C)} < \frac{v_j}{c_j(c_j + \mathcal{C})}$. \end{proof}

\begin{theorem}\label{thm:2} Given an ARP instance:
\begin{itemize}
\item[(1)] If it has feasible solutions, it must have sequential feasible solutions;

\item[(2)] The optimal feasible solution must be the optimal sequential feasible solution.\end{itemize}
\end{theorem}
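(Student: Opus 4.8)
The plan is to reduce both statements to a single \emph{improvement lemma}: if an ordering $\pi$ is not a sequential feasible solution, then there is another ordering $\pi'$ with $S_{\pi'} > S_{\pi}$. Granting this, part (2) is immediate: an optimal feasible solution admits no strict improvement, so it must already be sequential feasible, and since every sequential feasible solution is in particular a feasible solution, the optimal feasible solution is also the optimal sequential feasible solution. Part (1) follows as well, by starting from any feasible solution and repeatedly applying the improvement lemma; this produces a strictly increasing sequence of objective values, and as there are only finitely many orderings the process must halt, necessarily at a sequential feasible solution.

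To prove the improvement lemma, suppose $\pi$ fails to be sequential feasible, so some pair $i<j$ violates both \eqref{eqt2} and \eqref{eqt3}. Writing $g(C) = \varphi(A_{\pi(i)},C) - \varphi(A_{\pi(j)},C)$, the two violations say exactly that $g(C_{\pi(j)}) > 0$ and $g(C_{\pi(i)}) > 0$. The first step is to upgrade this to domination on the whole interval. Clearing the denominators in $\varphi$, one sees that $g(C)$ equals an affine function of $C$ divided by a strictly positive quantity, so its sign is that of an affine function, which cannot be positive at the two endpoints $C_{\pi(j)} < C_{\pi(i)}$ without being positive on all of $[C_{\pi(j)},C_{\pi(i)}]$; this is exactly the monotone crossover behaviour isolated in \cref{lem:2}. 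Hence $\varphi(A_{\pi(i)},C) > \varphi(A_{\pi(j)},C)$ for every tail level $C$ occurring between positions $i$ and $j$, which by the two-airplane comparison preceding the definition means the objective prefers $A_{\pi(i)}$ to be dropped \emph{after} $A_{\pi(j)}$ at each such level.

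The second step is to exhibit the improvement. I would relocate $A_{\pi(i)}$ from position $i$ to the slot immediately following position $j$, leaving all other airplanes in their relative order. Only the airplanes at positions $i+1,\dots,j$ change their cumulative sums, each gaining $c_{\pi(i)}$ in its denominator, while $A_{\pi(i)}$ has its own cumulative sum drop from $C_{\pi(i)}$ to $C_{\pi(j)}$. Thus $S_{\pi'}-S_{\pi}$ is a single positive gain term for $A_{\pi(i)}$, namely $v_{\pi(i)}\bigl(\tfrac{1}{c_{\pi(i)}+C_{\pi(j)}} - \tfrac{1}{c_{\pi(i)}+C_{\pi(i)}}\bigr)$, set against a sum of loss terms, one per intermediate airplane; equivalently, telescoping the relocation into its $j-i$ successive neighbour exchanges (each evaluated at its own unchanged tail) gives $S_{\pi'}-S_{\pi} = \sum_{l=i+1}^{j} w_l\,[\varphi(A_{\pi(i)},C_{\pi(l)}) - \varphi(A_{\pi(l)},C_{\pi(l)})]$ with positive weights $w_l$, whose $l=j$ summand is $w_j\,g(C_{\pi(j)})>0$.

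The main obstacle is precisely showing that this difference is strictly positive. The loss terms are genuinely present: an ordering can have \emph{no} improving adjacent transposition and yet still contain such a violating pair, so no purely local argument works and the improvement is inherently non-local. The crux is therefore to bound the aggregate loss of the intermediate airplanes against the gain of $A_{\pi(i)}$, and the only leverage available is that $A_{\pi(i)}$ dominates $A_{\pi(j)}$ in relative distance factor across the \emph{entire} interval $[C_{\pi(j)},C_{\pi(i)}]$, not merely at its endpoints; pinning down this inequality from the mixed sign pattern of the telescoped summands is where the real work lies. If relocating $A_{\pi(i)}$ forward should fail to be improving, I would fall back to the symmetric move of relocating $A_{\pi(j)}$ backward to just before position $i$, and argue that at least one of the two moves must strictly increase $S_{\pi}$.
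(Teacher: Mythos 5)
Your reduction to an \emph{improvement lemma} is sound, and the two computations you do carry out are correct: the sign of $\varphi(A_{\pi(i)},C)-\varphi(A_{\pi(j)},C)$ is that of an affine function of $C$, so violating both \cref{eqt2} and \cref{eqt3} does force $\varphi(A_{\pi(i)},C)>\varphi(A_{\pi(j)},C)$ on all of $[C_{\pi(j)},C_{\pi(i)}]$ (this is the content of \cref{lem:2}); and the telescoped identity $S_{\pi'}-S_{\pi}=\sum_{l=i+1}^{j}w_l\,[\varphi(A_{\pi(i)},C_{\pi(l)})-\varphi(A_{\pi(l)},C_{\pi(l)})]$ for your relocation move is right. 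But the argument stops exactly where the theorem's actual content begins. Interval domination is a statement about the single pair $(\pi(i),\pi(j))$, whereas the summands with $i<l<j$ compare $A_{\pi(i)}$ against the \emph{intermediate} airplanes $A_{\pi(l)}$, about which the violation hypothesis says nothing; those summands can be negative, and nothing you have written bounds their total against the one positive $l=j$ term. You acknowledge this yourself ("where the real work lies") and offer only the unproven fallback that at least one of the two symmetric relocations must improve --- a disjunction that itself requires proof and is not obviously true. So the improvement lemma, and with it both parts of the theorem, remains unestablished.

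The device you are missing is the one the paper uses for part (2): do not relocate one airplane past the block, \emph{swap} the two violating airplanes in place, leaving the intermediate block where it is. After disposing of the case $v_{\pi(i)}/c_{\pi(i)}^2\le v_{\pi(j)}/c_{\pi(j)}^2$ (where \cref{thm:1} and \cref{lem:2} settle matters), \cref{lem:1} gives $c_{\pi(i)}<c_{\pi(j)}$, so exchanging the positions of $A_{\pi(i)}$ and $A_{\pi(j)}$ \emph{decreases} the cumulative consumption seen by every intermediate airplane by $c_{\pi(j)}-c_{\pi(i)}>0$. Every middle term therefore strictly improves and there are no loss terms to control; only the two endpoint terms remain, and those are compared using the violated inequalities. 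Your relocation move manufactures the hard bookkeeping problem; the swap dissolves it. Note also that the paper proves part (1) constructively via the greedy \cref{alg:1} rather than by iterating improvements; your iteration argument for (1) would be fine once the improvement lemma is in hand, but the constructive route avoids depending on it.
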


\begin{proof}

\begin{itemize}
\item[(1)] Given $n$ airplanes $A_1, \cdots, A_n$, we can obtain a sequential feasible solution by running \Cref{alg:1}. The computational complexity of \Cref{alg:1} is $O(n^2)$.

\begin{algorithm}[htbp]
\caption{$Find-SequentialFS(\mathcal{A})$}
\label{alg:1}
\begin{algorithmic}[1]
\Require $\mathcal{A} = \{A_1, \cdots, A_n\}$
\Ensure $\pi = (\pi(1), \cdots, \pi(n))$
\State{$k \Leftarrow n$, $\mathcal{C}_o \Leftarrow 0$}
\While{$k \geqslant 1$}
    \State{$\varphi_o \Leftarrow \varphi(A_1, \mathcal{C}_o)$, see \Cref{def:1}}
    \State{$k_o \Leftarrow 1$}
    \For{$i := 2$ to $k$}
        \If{$\varphi(A_i, \mathcal{C}_o) > \varphi_o$}
            \State{$k_o \Leftarrow i$}
            \State{$\varphi_o \Leftarrow \varphi(A_i, \mathcal{C}_o)$}
        \EndIf
    \EndFor
    \State{$\mathcal{C}_o \Leftarrow \mathcal{C}_o + c_{k_o}$}
    \State{$\mathcal{A} \Leftarrow \mathcal{A} \backslash \{ A_{k_o} \}$}
    \If{$k < n$}
        \State{$count \Leftarrow 0$}
        \For{$i := k+1$ to $n$}
            \If{$k_o \geqslant \pi(i)$}
                \State{$count \Leftarrow count + 1$}
            \EndIf
        \EndFor
    \EndIf
    \State{$k_o \Leftarrow k_o + count$}
    \State{$\pi(k) \Leftarrow k_o$}
    \State{$k \Leftarrow k-1$}
\EndWhile
\end{algorithmic}
\end{algorithm}

For each pair of airplanes $A_{\pi(i)}$ and $A_{\pi(j)}$ for $i < j$ in $\pi$ obtained by \Cref{alg:1}, the \cref{eqt2} exists for all the time. Therefore, the $\pi$ is a sequential feasible solution.

\item[(2)] Given an $n$-airplane instance, we will prove that its optimal feasible solution must be a sequential feasible solution. Suppose the optimal feasible solution $\pi^*$ is not a sequential feasible solution, then there exist two airplanes $A_{\pi^*(i)}$ and $A_{\pi^*(j)}$ that do not satisfy the requirements of sequential feasible solution. Suppose $A_{\pi^*(i)} \Rightarrow A_{\pi^*(A)} \Rightarrow A_{\pi^*(j)}$, here $A_{\pi^*(A)}$ denotes the set of airplanes travel between $A_{\pi^*(i)}$ and $A_{\pi^*(j)}$. Suppose $C_{\pi^*(j)} = \sum\limits_{t=j+1}^nc_{\pi^*(t)}$ and $\mathcal{C}_{\pi^*(i)} = \sum\limits_{t=i+1}^nc_{\pi^*(t)}$. Then it follows that

    \begin{equation}\label{eqt4}
    \frac{v_{\pi^*(i)}}{c_{\pi^*(i)}(c_{\pi^*(i)} + \mathcal{C}_{\pi^*(j)})} > \frac{v_{\pi^*(j)}}{c_{\pi^*(j)}(c_{\pi^*(j)} + \mathcal{C}_{\pi^*(j)})}
    \end{equation}
     and
     \begin{equation}\label{eqt5}
    \frac{v_{\pi^*(i)}}{c_{\pi^*(i)}(c_{\pi^*(i)} + \mathcal{C}_{\pi^*(i)})} > \frac{v_{\pi^*(j)}}{c_{\pi^*(j)}(c_{\pi^*(j)} + \mathcal{C}_{\pi^*(i)})}.
    \end{equation}

    According to \Cref{thm:1}, suppose $\frac{v_{\pi^*(i)}}{c_{\pi^*(i)}^2} \leqslant \frac{v_{\pi^*(j)}}{c_{\pi^*(j)}^2}$. In this case, if $\frac{v_{\pi^*(i)}}{c_{\pi^*(i)}} \leqslant \frac{v_{\pi^*(j)}}{c_{\pi^*(j)}}$, then for any $\mathcal{C}_o > 0$,
    \begin{equation}\label{eqt6}
    \frac{v_{\pi^*(i)}}{c_{\pi^*(i)}(c_{\pi^*(i)} + \mathcal{C}_o)} \leqslant \frac{v_{\pi^*(j)}}{c_{\pi^*(j)}(c_{\pi^*(j)} + \mathcal{C}_o)}.
    \end{equation}

    Since \cref{eqt6} is contract with both \cref{eqt4} and \cref{eqt5}, when $\frac{v_{\pi^*(i)}}{c_{\pi^*(i)}^2} \leqslant \frac{v_{\pi^*(j)}}{c_{\pi^*(j)}^2}$, there is $\frac{v_{\pi^*(i)}}{c_{\pi^*(i)}} > \frac{v_{\pi^*(j)}}{c_{\pi^*(j)}}$.

    According to \Cref{lem:2}, for any $\mathcal{C} \geqslant \mathcal{C}_{\pi^*(j)}$, $A_{\pi^*(j)}$ must refuel to $A_{\pi^*(i)}$ in a sequential feasible solution, which is contrary to the assumption that $\pi^*$ is the optimal solution. If $\frac{v_{\pi^*(i)}}{c_{\pi^*(i)}^2} > \frac{v_{\pi^*(j)}}{c_{\pi^*(j)}^2}$, then it follows that $\frac{v_{\pi^*(i)}}{c_{\pi^*(i)}} < \frac{v_{\pi^*(j)}}{c_{\pi^*(j)}}$ and $c_{\pi^*(i)} < c_{\pi^*(j)}$. If we switch the order of $A_{\pi^*(i)}$ and $A_{\pi^*(j)}$ to attain a new ordering $\hat{\pi}$ and its related sequence is $A_{\pi^*(j)} \Rightarrow A_{\pi^*(A)} \Rightarrow A_{\pi^*(i)}$. It follows that
    \begin{displaymath}
    S_{\hat{\pi}} = \frac{v_{\pi^*(i)}}{c_{\pi^*(i)} + \mathcal{C}_{\pi^*(j)}} + \frac{v_{\pi^*(A)}}{c_{\pi^*(i)} + \mathcal{C}_{\pi^*(j)} + \mathcal{C}_{\pi^*(A)}} + \frac{v_{\pi^*(j)}}{c_{\pi^*(i)} + \mathcal{C}_{\pi^*(i)}}
    \end{displaymath}
    and
    \begin{displaymath}
    S_{\pi^*} = \frac{v_{\pi^*(j)}}{c_{\pi^*(j)} + \mathcal{C}_{\pi^*(j)}} + \frac{v_{\pi^*(A)}}{c_{\pi^*(j)} + \mathcal{C}_{\pi^*(j)} + \mathcal{C}_{\pi^*(A)}} + \frac{v_{\pi^*(i)}}{c_{\pi^*(j)} + \mathcal{C}_{\pi^*(i)}}.
    \end{displaymath}

    Since $c_{\pi^*(i)} < c_{\pi^*(j)}$, then it follows that $S_{\hat{\pi}} > S_{\pi^*}$, which is contrary to the assumption that $\pi^*$ is the optimal feasible solution.

    Therefore the optimal feasible solution of ARP must be the optimal sequential feasible solution. \end{itemize}

    \end{proof}

\subsection{Complexity analysis focused on worst case}
\begin{definition}\label{def:5} An ARP instance is called the {\bf worst case} when it has the greatest number of sequential feasible solutions for given input size of $n$.
\end{definition}

\begin{definition}\label{def:6} An airplane refueling instance is called a {\bf complete reverse order sequence}, if
 \begin{displaymath}\frac{v_1}{c_1^2}>\frac{v_2}{c_2^2}>\cdots>\frac{v_n}{c_n^2}\end{displaymath} and
\begin{displaymath}\frac{v_1}{c_1}<\frac{v_2}{c_2}<\cdots<\frac{v_n}{c_n}.\end{displaymath}\end{definition}


 In \cite{yu18}, the authors introduced cluster as a tool to model the computational complexity of $n$-vehicle exploration problem, and they claimed that the complete reverse order instances have greater computational complexity than the other kinds of instances. Similarly, we proposed the following \Cref{lem:3}.

\begin{lemma}\label{lem:3} The worst case must be a complete reverse order sequence for any given input size of $n \geqslant 2$.
\end{lemma}

\begin{proof}It is obvious that \Cref{lem:3} holds when $n = 2$.

Suppose a worst case of $k$-airplane instance is a complete reverse order sequence with $\frac{v_1}{c_1^2}>\frac{v_2}{c_2^2}>\cdots>\frac{v_k}{c_k^2}$ and $\frac{v_1}{c_1}<\frac{v_2}{c_2}<\cdots<\frac{v_k}{c_k}$, we will show that a worst case of $(k+1)$-airplane instance must be a complete reverse order sequence.

Suppose a worst case of $(k+1)$-airplane instance is not a complete reverse order sequence, which means that it doesn't satisfy with both $\frac{v_1}{c_1^2}>\frac{v_2}{c_2^2}>\cdots>\frac{v_k}{c_k^2} > \frac{v_{k+1}}{c_{k+1}^2}$ and $\frac{v_1}{c_1}<\frac{v_2}{c_2}<\cdots<\frac{v_k}{c_k} < \frac{v_{k+1}}{c_{k+1}}$. In this case, if we arrange any $A_i$ for $i \leqslant k$ as the farthest airplane, then the rest of $k$ airplanes can not form a complete reverse order sequence as supposed, so the rest of $k$ airplanes is not a worst case. Therefore increase $A_{k+1}$ into the $k$-airplane instance does not create the greatest increment of computational complexity, which is obviously contract to the assumption of worst case. \end{proof}

\begin{theorem}\label{thm:3} For any worst case of ARP instances with $\frac{v_1}{c_1^2}>\frac{v_2}{c_2^2}>\cdots>\frac{v_n}{c_n^2}$ and $\frac{v_1}{c_1}<\frac{v_2}{c_2}<\cdots<\frac{v_n}{c_n}$, let $Q_n$ represents the number of its sequential feasible solutions, then $Q_n \leqslant 2^{n-2}$ for $n \geqslant 2$.\end{theorem}

\begin{proof}Given a worst case of ARP with $\frac{v_1}{c_1^2}>\frac{v_2}{c_2^2}>\cdots>\frac{v_n}{c_n^2}$ and $\frac{v_1}{c_1}<\frac{v_2}{c_2}<\cdots<\frac{v_n}{c_n}$, we will prove that $Q_n$ is upper bounded by $2^{n-2}$. To reduce the search space to $2^{n-2}$ is not an obvious work, so we introduced combination into the proof because of $2^{n-2} = \binom{n-2}{0} + \binom{n-2}{1} + \cdots + \binom{n-2}{n-2}$. Here $\binom{n-2}{p}$ relates to the number of the sequential feasible solutions when $p$ airplanes are chosen that lie between $A_{\pi(n)}$ and $A_n$ for $0 \leqslant p \leqslant n-2$, here $\pi(n)$ is the farthest position in an order $\pi$.

When $n=2$, we've shown that $Q_2 = 1 = 2^{2-2}$. We use $\binom{0}{0}$ to calculate $Q_2$, which means, from combination point of view, that no airplane is chosen between  $A_{\pi(2)}$ and $A_2$, whose combination is account for increasing the number of sequential feasible solutions.

\begin{displaymath}\binom{0}{0}:\quad A_2 \Rightarrow A_1. \end{displaymath}

When $n=3$, we have $Q_3 \leqslant 2 = 2^{3-2}$. When $A_1$ is $A_{\pi(3)}$, $A_2$ is available between $A_1$ and $A_3$ that possible leads to $\binom{1}{1}$ sequential feasible solution. When $A_2$ is $A_{\pi(3)}$, there is another sequential feasible solution if and only if $\frac{v_3}{c_3 \times (c_3 + c_2)} > \frac{v_1}{c_1 \times (c_1 + c_2)}$. No other airplane is possible between $A_2$ and $A_3$, which leads to $\binom{1}{0}$ sequential feasible solution. There is no possibility when $A_3$ takes the farthest position in a sequential feasible solution. By running through all the possible situations, it follows that $Q_3 \leqslant\binom{1}{0} + \binom{1}{1}$.

\begin{equation}\label{eqt7}
\begin{aligned}
\binom{1}{1}:&\quad A_3 \Rightarrow A_2 \Rightarrow A_1, or \quad A_2 \Rightarrow A_3 \Rightarrow A_1\\
\binom{1}{0}:&\quad A_1 \Rightarrow A_3 \Rightarrow A_2\\
\end{aligned}\end{equation}

When $n=4$, let $A_1$ is $A_{\pi(4)}$, then the rest of $3$ airplanes correspond to at most $2$ sequential feasible solutions if it satisfies \cref{eqt4}. So the rest $3$ airplanes correspond to at most $\binom{1}{0} + \binom{1}{1} = 2$ sequential feasible solutions.

\begin{equation}\label{eqt8}\frac{v_4}{c_4(c_4+c_1+c_3)}>\frac{v_2}{c_2(c_2+c_1+c_3)}\end{equation}

If $A_2$ is $A_{\pi(4)}$, then it follows that:

\begin{equation}\label{eqt9}\frac{v_2}{c_2(c_2+c_2+c_3)}>\frac{v_1}{c_1(c_1+c_2+c_3)}.\end{equation}

Combining \cref{eqt8} with \cref{eqt9}, and by \Cref{lem:2}, it follows that:

\begin{equation}\label{eqt10}\frac{v_4}{c_4(c_4+c_2+c_3)}>\frac{v_2}{c_2(c_2+c_2+c_3)}>\frac{v_1}{c_1(c_1+c_2+c_3)}.\end{equation}

According to \cref{eqt10}, there is at most $\binom{1}{1} = 1$ sequential feasible solution as $A_1 \Rightarrow A_4 \Rightarrow A_3 \Rightarrow A_2$.

If $A_3$ is $A_{\pi(4)}$, there is at most $\binom{1}{0}$ possible sequential feasible solution.

By summing up all the above cases, we have $Q_4 \leqslant \binom{1}{0} + \binom{1}{1} + \binom{1}{1} + \binom{1}{0}$, therefore $Q_4 \leqslant \binom{2}{0} + \binom{2}{1} + \binom{2}{2}$.
\begin{equation}\label{eqt11}
\begin{aligned}
\binom{2}{2}:&\quad A_4 \Rightarrow A_3 \Rightarrow A_2 \Rightarrow A_1, or \quad A_3 \Rightarrow A_4 \Rightarrow A_2 \Rightarrow A_1\\
\binom{2}{1}:&\quad A_2 \Rightarrow A_4 \Rightarrow A_3 \Rightarrow A_1\\
 &\quad A_1 \Rightarrow A_4 \Rightarrow A_3 \Rightarrow A_2\\
\binom{2}{0}:&\quad A_1 \Rightarrow A_2 \Rightarrow A_4 \Rightarrow A_3\end{aligned}\end{equation}

When $n=k$, suppose there are at most $2^{k-2}$ sequential feasible solutions. Moreover, there are at most $\binom{k-2}{p}$ sequential feasible solutions when $p$ airplanes are chosen between $A_{\pi(k)}$ and $A_k$.

When $n=k+1$, suppose $\frac{v_k}{c_k^2} > \frac{v_{k+1}}{c_{k+1}^2}$ and $\frac{v_k}{c_k} < \frac{v_{k+1}}{c_{k+1}}$. Then $Q_{k+1}$ consists of $k$ parts.

\begin{itemize}
\item[ (1): ] There is at most $\binom{k-2}{0}$ sequential feasible solution when $A_k$ takes the farthest position and $A_{k+1}$ takes the second farthest position.
\item[] $\cdots$
\item[($p+1$):] When $p$ airplanes lie between $A_{\pi(k+1)}$ and $A_{k+1}$, the number of sequential feasible solutions consists of two parts. The first part refers to $\binom{k-2}{p-1}$ sequential feasible solutions when $A_1$ is $A_{\pi(k+1)}$ and the rest of $k$ airplanes form a $k$-airplane worst case; The other part refers to $\binom{k-2}{p}$ sequential feasible solutions when $A_1$ does not take the farthest position. Totally there are $\binom{k-2}{p-1} + \binom{k-2}{p} = \binom{k-1}{p}$ sequential feasible solutions in this case. Thus for each $\binom{k-2}{p}$, when we add $A_{k+1}$ into the new sequence, from the view of combination, the number of sequential feasible solution changes to be $\binom{k-2}{p} + \binom{k-2}{p-1} = \binom{k-1}{p}$.
\item[] $\cdots$
\item[ ($k$): ] There is at most $\binom{k-1}{k-1}$ sequential feasible solution when $(k-1)$ airplanes lie between $A_{\pi(k+1)}$ and $A_{k+1}$.
\end{itemize}

\begin{equation}\label{eqt12}
\begin{aligned}
Q_{k+1} \leqslant & \binom{k-1}{0} + (\binom{k-2}{0} + \binom{k-2}{1}) + \cdots + (\binom{k-2}{k-3} + \binom{k-2}{k-2}) + \binom{k-1}{k-1}\\
= & \binom{k-1}{0} + \binom{k-1}{1} + \cdots + \binom{k-1}{k-2} + \binom{k-1}{k-1}\\
= & 2^{k-1}\end{aligned}\end{equation}

Therefore, $Q_n$ is upper bounded by $2^{n-2}$. \end{proof}

\section{The SSA is a polynomial-time algorithm to solve ARP when its input size of $n$ is greater than an inflection point $N$}
\label{sec:palgorithm}

Given an ARP instance with $\frac{v_1}{c_1^2} > \cdots > \frac{v_n}{c_n^2}$ and $\frac{v_1}{c_1} < \cdots < \frac{v_n}{c_n}$. $\mathcal{C}_{i,j}$ is the minimal cumulative sum of fuel consumption rates that satisfies with $\varphi(A_j, \mathcal{C}) \geqslant \varphi(A_i, \mathcal{C})$, then $\mathcal{C}_{i,j}$ is formulated as follows.

\begin{equation}\label{eqt13}\mathcal{C}_{i,j} = \frac{v_{i}c_j^2 - v_jc_{i}^2}{v_jc_{i} - v_{i}c_j}, 1 \leqslant i < j \leqslant n \end{equation}

For any $\mathcal{C} \geqslant \mathcal{C}_{i,j}$, it always follows that $\varphi(A_j, \mathcal{C}) \geqslant \varphi(A_{i}, \mathcal{C})$.

\begin{lemma}\label{lem:4} For a worst case of ARP instances with $\frac{v_1}{c_1^2} > \cdots > \frac{v_n}{c_n^2}$ and $L = \frac{v_1}{c_1} < \cdots < \frac{v_n}{c_n} = S$. There exists an index number $m$ associated with a cumulative sum of fuel consumption rates as $\mathcal{C}_o$, such that $\varphi(A_n, \mathcal{C}) > \varphi(A_{i}, \mathcal{C})$ for any $\mathcal{C} > \mathcal{C}_o$ and $1 \leqslant i \leqslant n-1$.\end{lemma}

\begin{proof} Let $\mathcal{C}_o$ equals to the maximal $\mathcal{C}_{i,n}$ for all $1 \leqslant i \leqslant n-1$.

 \begin{equation}\label{eqt14}\frac{v_n}{c_n(c_n + \mathcal{C}_o)} > \frac{v_i}{c_i(c_i + \mathcal{C}_o)}, 1 \leqslant i \leqslant n-1\end{equation}

Since $c_1 < \cdots < c_n$, we could determine $m$ by iteratively adding $c_i$ to $\mathcal{C}_m = \sum\limits_{k=1}^mc_k$ from $i = 1$ to $i = n$, until $\mathcal{C}_m$ is no longer less than $\mathcal{C}_o$. To be of more generality, we can also determine an upper bound of $m$ as $\bar{m} = \mathcal{C}_o / c_1$. 

The existence of $m$ is obvious, and we could prove it by contradiction.

If $m$ doesn't exist, there exist an airplane $A_j$ that can not meet \cref{eqt14}, which means that $\mathcal{C}_{j, n}$ has no upper bound.

\begin{equation}\label{eqt15}\mathcal{C}_{j, n} = \frac{v_{j}c_n^2 - v_nc_{j}^2}{v_nc_{j} - v_{j}c_n}\end{equation}

If $\mathcal{C}_{j, n} \rightarrow +\infty$, then $v_n / c_n \rightarrow +\infty$, which leads to contradiction with the assumption of $v_n / c_n \leqslant S$.

Given the index number $m$ and its related $\mathcal{C}_o$, if $\varphi(A_n, \mathcal{C}_o) > \varphi(A_{i}, \mathcal{C}_o)$, then $\varphi(A_n, \mathcal{C}) > \varphi(A_{i}, \mathcal{C})$ for any $\mathcal{C} > \mathcal{C}_o$ and $1 \leqslant i \leqslant n-1$. \end{proof}

\begin{theorem}\label{thm:4} For a worst case of ARP instance with $\frac{v_1}{c_1^2} > \cdots > \frac{v_n}{c_n^2}$ and $L = \frac{v_1}{c_1} < \cdots < \frac{v_n}{c_n} = S$. If
\begin{displaymath}
     \frac{v_n}{c_n} > \frac{v_i}{c_i} > \frac{v_j}{c_j}
\end{displaymath} and
\begin{displaymath}
\frac{v_n}{c_n^2} < \frac{v_i}{c_i^2} < \frac{v_j}{c_j^2}.
\end{displaymath}
Suppose we have determined a cumulative sum of fuel consumption rates $\mathcal{C}_o$ and an index number $m$ according to \Cref{lem:4}. If \begin{displaymath}
\varphi(A_n, \mathcal{C}_o) > \varphi(A_i, \mathcal{C}_o)\ and\ \varphi(A_n, \mathcal{C}_o) > \varphi(A_j, \mathcal{C}_o),
\end{displaymath}
then
\begin{displaymath}
     \varphi(A_i, \mathcal{C}_o) > \varphi(A_j, \mathcal{C}_o).
\end{displaymath}
\end{theorem}
\begin{proof} Let $S_i = a_i/b_i$ according to $[L, S_i]$, and $S_j = a_j/b_j$ according to $[L, S_j]$, then $[L, S_j] \subset [L, S_i] \subset [L, S]$.

Without loss of generality, suppose no $\mathcal{C}_j$ exists such that
\begin{equation}\label{eqt29}
     \varphi(A_j, \mathcal{C}_j) > \varphi(A_{j-1}, \mathcal{C}_j) > \cdots > \varphi(A_1, \mathcal{C}_j).
\end{equation}

If
\begin{equation}\label{eqt30}
     \varphi(A_j, \mathcal{C}_j) > \varphi(A_i, \mathcal{C}_j),
\end{equation}

Then we can determine $\mathcal{C}_i$ by iteratively adding $b_{j+1}, b_{j+2}, \cdots$ to $\mathcal{C}_j$ such that
\begin{equation}\label{eqt31}
     \varphi(A_i, \mathcal{C}_i) > \varphi(A_j, \mathcal{C}_i).
\end{equation}

Therefore, we have $\mathcal{C}_i \geqslant \mathcal{C}_j$.

If \cref{eqt30} dose not exist, then
\begin{displaymath}
     \varphi(A_i, \mathcal{C}_j) > \varphi(A_j, \mathcal{C}_j),\ and\ \mathcal{C}_j = \mathcal{C}_i.
\end{displaymath}

By conclusion, we have $\mathcal{C}_o \geqslant \mathcal{C}_{n-1} \geqslant \cdots \mathcal{C}_{n-m}$.
\end{proof}

\begin{lemma}\label{lem:5} For a worst case of ARP instances with $\frac{v_1}{c_1^2} > \cdots > \frac{v_n}{c_n^2}$ and $L = \frac{v_1}{c_1} < \cdots < \frac{v_n}{c_n} = S$. If we have determined an index number $m$, then $Q_n < \frac{m^2}{n}\binom{n}{m}$ when $n$ is greater than $2m$.\end{lemma}

\begin{proof} In the proof of \Cref{thm:3} we introduced combination number to calculate the number of sequential feasible solutions. For a given worst case of ARP instance, $Q_n$ is upper bounded by $2^{n-2}$, which is equal to $\sum_{p=0}^{n-2}\binom{n-2}{p}$. Hence $Q_n$ is divided into $n-1$ parts, and each part has at most $\binom{n-2}{k}$ sequential feasible solutions for $0 \leqslant p \leqslant n-2$. Here, $\binom{n-2}{p}$ means that only $p$ airplanes are chosen from $n-2$ airplanes to travel between $A_{\pi(n)}$ and $A_n$ in a sequential feasible solution. Once we have determined all of the airplanes that will take farther than $A_n$, then there is only one dropout order for the rest of airplanes that in a sequential feasible solution according to \Cref{thm:4}. Thus there are $p+1$ airplanes have the possibility to take precedence over $A_n$ and to increase the number of sequential feasible solutions.

According to \Cref{lem:4}, given a worst case of ARP instance, we can determine an index number $m$, such that for a worst case of ARP instances with $n > 2m$, it follows that at most $m-1$ airplanes are available to be chosen to locate between $A_{\pi(n)}$ and $A_n$, which means that $Q_n$ is composed by at most $m$ parts such as $\binom{n-2}{0}$, $\binom{n-2}{1}$, $\cdots$, $\binom{n-2}{m-1}$.

\begin{equation}\label{eqt16}
Q_n = \binom{n-2}{0} + \binom{n-2}{1} + \cdots + \binom{n-2}{m-1}\end{equation}

We will show that $Q_n$ in \cref{eqt16} is less than $\frac{m^2}{n}\binom{n}{m}$ when $n$ is greater than $2m$.

\begin{equation}\label{eqt17} \binom{n-2}{0} < \binom{n-2}{1} < \cdots < \binom{n-2}{m-1}\end{equation}

\begin{equation}\label{eqt18}
\begin{aligned}
Q_n &< m\binom{n-2}{m-1} \\
          &= m \times \frac{(n-2) \times (n-3) \times \cdots \times (n-m)}{(m-1) \times (m-2) \times \cdots \times 1}  \\
          &= \frac{m^2 \times (n-m)}{n \times (n-1)} \times \frac{n \times (n-1) \times (n-2) \times \cdots \times (n-m+1)}{m \times (m-1) \times (m-2) \times \cdots \times 1}  \\
          &< \frac{m^2}{n} \times \frac{n \times (n-1) \times (n-2) \times \cdots \times (n-m+1)}{m \times (m-1) \times \cdots \times 1}  \\
          &= \frac{m^2}{n} \times \binom{n}{m}
\end{aligned}
\end{equation}

Combined \cref{eqt17} with \cref{eqt18}, we have $Q_n < \frac{m^2}{n}\binom{n}{m}$ when $n > 2m$. \end{proof}

\begin{theorem}\label{thm:5} For a worst case of ARP instances with $\frac{v_1}{c_1^2} > \cdots > \frac{v_n}{c_n^2}$ and $L = \frac{v_1}{c_1} < \cdots < \frac{v_n}{c_n} = S$. Suppose we have determined an index number $m$, then there exists an inflection point $N = 2m$ such that $Q_n < \frac{m^2}{n}\binom{n}{m}$ for all values of $n > N$. In addition, $Q_n < n^{\underline{m}}$ when $n > N$.\end{theorem}

\begin{proof} According to \Cref{lem:5}, $Q_n$ is less than $\frac{m^2}{n}\binom{n}{m}$ when $n > N$.

\begin{equation}\label{eqt19}
Q_n < \frac{m^2}{n}\binom{n}{m} = \frac{m^2 \times n \times(n-1) \times \cdots \times(n-m+1)}{n \times m \times (m-1) \times \cdots \times 1} = n^{\underline{m}} < n^m
\end{equation}

Thus $Q_n$ is less than $n^{\underline{m}}$ when $n > N$.

\end{proof}

\begin{algorithm}[htbp]
\caption{$Sequential-Search-Algorithm(\mathcal{A})$} 
\label{alg:2}
\begin{algorithmic}[1]
\Require $\mathcal{A}$
\Ensure $\pi^*$ and $S^*$
\State{{\bf Step 1:} Run $Search-all-Sequential(\mathcal{A}, 0, 0)$}
  \State{Define Function: $\Pi \Leftarrow Search-all-Sequential(\mathcal{A}, \varphi_o, \mathcal{C}_o)$}
  \State{Sort $A_1, A_2, \cdots, A_n$ in decreasing order of $\varphi(A_i, \mathcal{C}_o)$ (see \Cref{def:1}) to get a new sequence $sa(1), \cdots, sa(n)$}
  \If{$n=1$}
      \If{$\varphi(A_{sa(1)}, \mathcal{C}_o) < \varphi_o$ or $\varphi_o = 0$}
          \State{$\pi \Leftarrow (sa(1))$}
          \State{$\Pi \Leftarrow \{\pi\}$}
      \EndIf
  \EndIf
  \If{$n=2$}
      \If{$\varphi(A_{sa(1)}, \mathcal{C}_o) < \varphi_o$ or $\varphi_o = 0$}
          \State{$\pi \Leftarrow ({sa(2)}, {sa(1)})$}
          \State{$\Pi \Leftarrow \{\pi\}$}
      \EndIf
  \EndIf
  \If{$n>2$}
      \For{$i := 1$ to $n-1$}
          \If{$\varphi(A_{sa(i)}, \mathcal{C}_o) < \varphi_o$ or $\varphi_o = 0$}
              \State{Let $A_{sa(i)}$ run the farthest position in the current sequence}
              \For{$j := i+1$ to $n$}
                 \If{$\varphi(A_{sa(i)}, \mathcal{C}_o) \geqslant \varphi(A_{sa(j)}, \mathcal{C}_o)$}
                     \State{$\mathcal{A} \Leftarrow \mathcal{A} \backslash \{A_{\pi({sa(i)})}\}$}
                     \State{$\varphi_o \Leftarrow \varphi(A_{sa(i)}, \mathcal{C}_o + c_{sa(i)})$}
                     \State{$\Pi^j \Leftarrow Search-all-sequential(\mathcal{A}, \varphi_o, \mathcal{C}_o + c_{sa(i)})$}
                     \For{Each element $k$ in $\Pi^j$}
                      \If{$k \geqslant sa(i)$}
                         \State{$k \Leftarrow k+1$}
                      \EndIf
                    \EndFor
                    \State{$\Pi \Leftarrow \{(\Pi^j, sa(i))\}$}
                 \EndIf
              \EndFor
         \EndIf
      \EndFor
  \EndIf
\State{{\bf Step 2:} Run $Bubble-sorting-Sequential(\mathcal{A}, \Pi)$}
  \State{Define Function: $(\pi^*, S^*) \leftarrow Bubble-sorting-Sequential(\mathcal{A}, \Pi)$}
  \State{$m \Leftarrow$ the number of rows in $\Pi$}
  \State{$S^* \Leftarrow 0$}
  \For{$i := 1$ to $m$}
    \State{$S^o \Leftarrow S_{\Pi(i)}$, see \cref{eqt1}}
    \If{$S^o > S^*$}
        \State{$S^* \Leftarrow S_{\Pi(i)}$}
        \State{$\pi^* \Leftarrow \Pi(i)$}
    \EndIf
  \EndFor
\end{algorithmic}
\end{algorithm}

Similar with the fast exact algorithm in \cite{lijs19}, SSA was proposed in \Cref{alg:2} by bubble sorting all of the sequential feasible solutions to get access to the optimal feasible solution. The computational complexity of \Cref{alg:2} is $O(n^2Q_n^2)$. According to \Cref{thm:3}, $Q_n \leqslant 2^{n-2}$ for ARP instances with small input size. However, by \Cref{thm:5}, $Q_n < n^m$ for ARP instances with input size of $n$ is greater than an inflection point $N = 2m$.

%

\begin{corollary}\label{cor:3}For a worst case of APR instance with $\frac{v_1}{c_1^2} > \cdots > \frac{v_n}{c_n^2}$ and $L = \frac{v_1}{c_1} < \cdots < \frac{v_n}{c_n} = S$. There exists an inflection point $N$, such that SSA runs in polynomial time when the input size of $n$ is greater than $N$.\end{corollary}

The computational complexity of SSA is presented as follows.

\begin{equation}
\label{eqt20}
O(n^2Q_n^2) = \begin{cases}
\begin{aligned}
O(n^22^{2n-4}), &  &2 \leqslant n \leqslant 2m \\
O(n^{2m+2}), &  &n > 2m
\end{aligned}
\end{cases}
\end{equation}

Given a large set of airplanes with an inflection point $N$, a turn toward polynomial running time of SSA occurs when the input size of $n$ is greater than $N$. The turning point of $N$ is regarded as an inflection point: when $n$ is less than $2m$, the upper bound of $Q_n$ has risen sharply but when $n$ ia greater than $2m$, although the amount of possible sequential feasible solutions is still rising, its growth rate is slowing down.



\begin{theorem}\label{thm:6} Suppose $\mathcal{A}$ is a set of airplanes, the single flight distance of each $A_i \in \mathcal{A}$ is limited by $L \leqslant v_i / c_i \leqslant S$. For any ARP instance chosen from $\mathcal{A}$, there must exist an inflection point $\bar{N}$, such that when the input size of $n$ is greater than $\bar{N}$, the time complexity of SSA running on the ARP instance changes to grow at polynomial rate. Besides, the $\bar{N}$ - obtained from the worst case of $\mathcal{A}$ - only depends on the interval $[L, S]$ and does not depend on the input size of $n$.\end{theorem}

\begin{proof} In proof of \Cref{lem:4}, we have discussed the idea and method to prove the existence of inflection point $N$ for a given ARP instance.

The upper bound of $N$ for airplanes' set $\mathcal{A}$ is determined by the following process.

Given the worst case of ARP instances of $\mathcal{A}$ with $\frac{v_1}{c_1^2} > \cdots > \frac{v_n}{c_n^2}$ and $L = \frac{v_1}{c_1} < \cdots < \frac{v_n}{c_n} = S$. Assuming $\mathcal{C}_o$ equals to the maximal $\mathcal{C}_{i,n}$ (see \cref{eqt13}) for $1 \leqslant i < n$ corresponding to the worst case. Let $\bar{m} = \mathcal{C}_o / c_1$. Since $c_1 \leqslant \cdots \leqslant c_n$, then an inflection point $\bar{N} = 2\bar{m}$ must be greater than any inflection point of ARP instances in $\mathcal{A}$.

Next we will prove the existence of $\bar{N}$ by contradiction.

For an ARP instance chosen from airplanes' set $\mathcal{A}$, its related index number $\hat{m}$ is less than $\hat{\mathcal{C}} / c_1$. Here, $\hat{\mathcal{C}}$ is the maximal $\mathcal{C}_{i,j}$ as defined in \cref{eqt13}. If the inflection point $\hat{N} = 2\hat{m}$ is greater than $\bar{N}$, then for each $A_i$ in the ARP instance, \cref{eqt21} always holds.

\begin{equation}\label{eqt21}\frac{v_n}{c_n(c_n + \hat{\mathcal{C}})} > \frac{v_i}{c_i(c_i + \hat{\mathcal{C}})}, i = 1, \ldots, n-1\end{equation}

Since $\bar{N}$ is obtained from the worst case of $\mathcal{A}$, \cref{eqt22} always holds.
\begin{equation}\label{eqt22}\frac{v_n}{c_n(c_n + \mathcal{C}_o)} > \frac{v_i}{c_i(c_i + \mathcal{C}_o)}, i = 1, \ldots, n-1\end{equation}

According to \Cref{lem:2} and the assumption of the worst case, we have $\mathcal{C}_o \geqslant \hat{\mathcal{C}}$ and $\bar{m} \geqslant \hat{m}$. So there is an inflection point as $\bar{N} = 2\bar{m}$, such that for any ARP instance chosen from $\mathcal{A}$, SSA is a polynomial-time algorithm to solve the ARP instance when its input size of $n$ is greater than the inflection point $\bar{N}$.

Next we will prove that $\bar{N}$ dose not depend on the input size of $n$.

For the worst case of ARP instances in $\mathcal{A}$, according to \Cref{thm:4}, it follows that $\mathcal{C}_o \geqslant \mathcal{C}_{n-1,n} \geqslant \cdots \geqslant \mathcal{C}_{1,n}$, which refers to $[L, S_1] \subset [L, S_2] \subset \cdots \subset [L, S]$.

\begin{equation}\label{eqt23}
\begin{aligned}
\mathcal{C}_o & = \mathcal{C}_{n-1,n} \\
          &= \frac{v_{n-1}c_n^2 - v_nc_{n-1}^2}{v_nc_{n-1} - v_{n-1}c_n} \\
\end{aligned}
\end{equation}

\begin{equation}\label{eqt24}
\bar{m} = \frac{\mathcal{C}_o}{c_1} = \frac{\frac{v_{n-1}}{c_{n-1}}(\frac{c_n}{c_{1}} - \frac{c_{n-1}}{c_{1}})}{\frac{v_n}{c_n} - \frac{v_{n-1}}{c_{n-1}}} - \frac{c_{n-1}}{c_{1}}\\
\end{equation}

In practice, $\frac{v_n}{c_n} - \frac{v_{n-1}}{c_{n-1}}$ can not be infinitely small, otherwise the related two airplanes are regarded to be identity.

So the upper bound of inflection point $\bar{N} = 2\bar{m}$ in \cref{eqt24} is just decided by the values of $L$ and $S$. It does not depend on how many airplanes located between $[L, S]$ and it dose not depend on the input size of $n$.

\end{proof}


According to \Cref{thm:6}, suppose $\mathcal{A}$ is a set of $n$ airplanes with $L = \frac{v_1}{c_1} < \cdots < \frac{v_n}{c_n} = S$. For each ARP instance in $\mathcal{A}$, there exists an index number $m$, such that the number of sequential feasible solutions of the instance is less than $n^m$ when $n > 2m$.

\begin{equation}
\label{eqt25}
\begin{cases}
\begin{aligned}
Q_n &\leqslant 2^{n-2}, &  &2 \leqslant n \leqslant 2m \\
Q_n &< n^{m}, &  &n > 2m
\end{aligned}
\end{cases}
\end{equation}

Suppose an ARP instance chosen from $\mathcal{A}$ is limited in an interval $[L, S_1]$, $S_1 < S$. The related index number of this ARP instance is $m_1$ and the related $Q_n$ must be less than $n^{m_1}$ when $n > 2m_1$. Moreover, assuming $\bar{m}$ is the upper bound of $m$ for $\mathcal{A}$ with an interval $[L, S]$, according to \Cref{thm:6}, for any ARP instance chosen from $\mathcal{A}$, $Q_n$ is displayed in \cref{eqt26}. Since $S_1 < S$, if follows that $m_1 < \bar{m}$.
%

\begin{equation}
\label{eqt26}
\begin{cases}
\begin{aligned}
Q_n &\leqslant 2^{n-2}, &  &2 \leqslant n \leqslant 2\bar{m} \\
Q_n &< n^{\bar{m}}, &  &n > 2\bar{m}
\end{aligned}
\end{cases}
\end{equation}


\begin{corollary}\label{cor:4} Suppose $\mathcal{A}$ is a set of airplanes, the single flight distance of each $A_i \in \mathcal{A}$ is limited by $L \leqslant v_i / c_i \leqslant S$. Then \Cref{alg:2} is a polynomial-time algorithm for any ARP instance chosen from $\mathcal{A}$.\end{corollary}

\begin{proof} According to \cref{eqt25}, SSA is a polynomial-time algorithm to solve ARP instance when its input size of $n$ is greater than an inflection point $N = 2m$.


\begin{equation} \label{eqt27} Q_n = 2^{n-2} < n^m,  2 \leqslant n \leqslant 3 \end{equation}

\begin{equation}
\label{eqt28}
\begin{aligned}
Q_n &\leqslant 2^{n-2}, &4 \leqslant n \leqslant 2m \\
    &\leqslant 2^{2m - 2} &\\
    &\leqslant 4^{m-1} &\\
    &< n^m &
\end{aligned}
\end{equation}

\end{proof}

The following \Cref{alg:3} is proposed to estimate the $m$ and $Q_n$.

\begin{algorithm}[htbp]
\caption{$Estimate-SSA-Complexity(\mathcal{A})$} 
\label{alg:3}
\begin{algorithmic}[1]
\Require $\mathcal{A}$
\Ensure $m$, and $Q_n$
\State{Sort $\mathcal{A}$ in decreasing order of $\varphi(A_i, 0)$, see \Cref{def:1}}
\State{$m \Leftarrow 1$}
\State{$\mathcal{C}_m \Leftarrow c_1$}
\State{$\mathcal{A} \Leftarrow \{A_2, \cdots, A_{n-1}\}$}
\While{$m<n-1$ and $\varphi(A_n, \mathcal{C}_m) < max\{\varphi(A_i, \mathcal{C}_m)$ and $A_i\in \mathcal{A}\}$}
    \State{$m \Leftarrow m + 1$}
    \State{$\mathcal{C}_m \Leftarrow \mathcal{C}_m + c_m$}
    \State{$\mathcal{A} \Leftarrow \mathcal{A} \backslash \{A_m, \cdots, A_{n-1}\}$}
    \State{$Q_n \Leftarrow \frac{m^2}{n}\binom{n}{m}$}
\EndWhile
	\end{algorithmic}
\end{algorithm}

For any ARP instance in $\mathcal{A}$, especially for those worst cases with different intervals of single flight distance, a phenomenon of inflection point was discovered by observing the change in the number of sequential feasible solutions. The dependency relationship between inflection point and interval of single flight distance was found. A method for finding a constant inflection point was presented. Thus it was proved that SSA is a polynomial-time algorithm to solve ARP.

\section{Efficient computability scheme for ARP}
\label{sec:efficom}

Given an instance of ARP, we could acknowledge within polynomial time how much time SSA runs on it, according to which we choose a proper algorithm considering available running time. In \Cref{thm:3}, we claimed that the upper bound of $Q_n$ is $2^{n-2}$. However, in \Cref{thm:5} we proved that an instance is polynomial-time solvable when its input size of $n$ is greater than $N$. Here $N$ is regarded as an inflection point of the number of possible sequential feasible solutions: the computational complexity of given instance grows exponentially for $n \leqslant N$, but grows polynomially for $n > N$. For this reason, to predict $Q_n$, we shall just find the inflection point $N$ and its related index number $m$.

The main idea of efficient computability is that, given an ARP instance, we shall forecast in polynomial time the particular computational complexity that SSA runs on it, which provides useful information to decision makers or algorithm users before they try to solve it.

According to \cref{eqt16} in \cref{sec:palgorithm}, $Q_n$ is an aggregation of $m$ parts, and each part is upper bounded by $\binom{n-2}{i-1}$ for $1 \leqslant i \leqslant m$. Here $\binom{n-2}{i-1}$ is the maximal amount of potential sequential feasible solutions when there are $i$ airplanes take precedence over the airplane with the greatest $v_j / c_j$ for $1 \leqslant j \leqslant n$. We propose the following \Cref{alg:4} to get an approximate value of $\widetilde{m}$ and $\widetilde{Q_n}$ by using heuristic method.

\begin{algorithm}[htbp]
\caption{$Heuristic-SSA-Complexity(\mathcal{A})$} 
\label{alg:4}
\begin{algorithmic}[1]
\Require $\mathcal{A}$
\Ensure $\widetilde{m}$ and $\widetilde{Q_n}$
\State{Sort $\mathcal{A}$ in decreasing order of $\varphi(A_i, 0)$, see \Cref{def:1}}
\For{$i := 1$ to $n-1$}
    \State{Let $A_i$ takes the farthest position}
    \State{Running \Cref{alg:1} to get a sequence $\pi^i$}
    \State{Calculate $S_{\pi^i}$ according to \cref{eqt1}}
\EndFor
\State{$\pi^* \Leftarrow arg\ max\{S_{\pi^i}, 1 \leqslant i < n\}$}
\State{Find $\widetilde{m}$ such that $\pi^*(\widetilde{m}+1) = arg\ max\{\frac{v_{\pi^*(i)}}{c_{\pi^*(i)}}\}$}
\State{$\widetilde{Q_n} \Leftarrow \frac{\widetilde{m}^2}{n}\binom{n}{\widetilde{m}}$}
\end{algorithmic}
\end{algorithm}

We consider some possible situations in addition to pose a comprehensive mechanism of efficient computability:
\begin{itemize}
\item[(1)] Special case in \Cref{thm:1} with $\frac{v_1}{c_1^2} \leqslant \cdots \leqslant \frac{v_n}{c_n^2}$, and $\frac{v_1}{c_1} \leqslant \cdots \leqslant \frac{v_n}{c_n}$. Such instances are easy to solve.
\item[(2)] Not complete reverse order sequence. Such instances can be transformed into lower dimensional form which are relatively easy to solve.
\item[(3)] Complete reverse order sequence including the following three categories:
\begin{itemize}
\item[(3-1)] When $m<<n$, such instances are easy to solve;
\item[(3-2)] When $m \approx n/2$, such instances are similar with a worst case. But once $m$ is fixed, and when $n$ gets sufficiently large, the number of sequential feasible solutions changes to be less than $\frac{m^2}{n}\binom{n}{m}$;
\item[(3-3)] When $m \approx n$, such instances are easy to solve at this case. But if we increase the input scale of $n$ to $2m$, it changes to be case (3-2) at first and to be (3-1) at last.\end{itemize}\end{itemize}

\section{Numerical illustration}
\label{sec:example}

\subsection{Example $1$}

We build an ARP instance contains $60$ airplanes with $\frac{v_1}{c_1^2} > \cdots > \frac{v_n}{c_n^2}$ and $\frac{v_1}{c_1} < \cdots < \frac{v_n}{c_n}$. The data of example $1$ is partly displayed in \Cref{tab:3}.

\begin{table}[htbp]
\caption{Data of Example $1$}\label{tab:3} %
\begin{tabular}{@{}lllllllllllll@{}}
\toprule
$\mathcal{A}$ & $A_1$ & $A_2$ & $A_3$ & $A_4$ & $A_5$ & $A_6$ & $A_{10}$ & $A_{20}$ & $A_{30}$ & $A_{40}$ & $A_{50}$ & $A_{60}$\\
\midrule
  $v_i$  & $4$ & $6.27$ & $8.73$ & $11.36$ & $14.18$ & $17.18$ & $31$ & $78.27$ & $143.73$ & $227.36$ & $329.18$ & $449.18$\\
    $c_i$  & $2$ & $3$ & $4$ & $5$ & $6$ & $7$ & $11$ & $21$ & $31$ & $41$ & $51$ & $61$\\
\botrule
\end{tabular}
\end{table}


The index number of Example $1$ is $m = 6$ by running \Cref{alg:3}. Suppose we choose different fleet of airplanes from the $60$ airplanes in example $1$ to get various ARP instances with different input size of $n$, we can calculate their number of sequential feasible solutions of $Q_n$ by running part $1$ in \Cref{alg:2}. The different $Q_n$ is displayed in \Cref{tab:4}.

\begin{table}[htbp]
\caption{Time complexity results of Example $1$}\label{tab:4} %
\begin{tabular}{@{}lllllllllllll@{}}
\toprule
$n$ & $6$ & $12$ & $24$ & $36$ & $48$ & $60$\\
  \midrule
   $2^{n-2}$  & $16$ & $1 024$ & $4.19 \times 10^{6}$ & $1.72 \times 10^{10}$ & $7.04 \times 10^{13}$ & $2.88 \times 10^{17}$\\
     &     &         &                      &                      &                      &           \\
   $\frac{m^2}{n}\binom{n}{m}$  & $6$ & $2,772$ & $2.02 \times 10^{5}$ & $1.95 \times 10^{6}$ & $9.20 \times 10^{6}$ & $3.00 \times 10^{7}$\\
    \midrule
  $Q_n$  & $1$ & $399$ & $2,323$ & $4,374$ & $6,355$ & $8,406$\\
\botrule
\end{tabular}
\end{table}

\subsection{Example $2$}

We build an ARP instance contains $1000$ airplanes with $\frac{v_1}{c_1^2} > \cdots > \frac{v_n}{c_n^2}$ and $\frac{v_1}{c_1} < \cdots < \frac{v_n}{c_n}$. The data of example $2$ is partly displayed in \Cref{tab:5}.

\begin{table}[htbp]
\caption{Data of Example $2$}\label{tab:5} %
\begin{tabular}{@{}lllllll@{}}
\toprule
$\mathcal{A}$ & $A_1$ & $A_{200}$ & $A_{400}$ & $A_{600}$ & $A_{800}$ & $A_{1000}$\\
\midrule
  $v_i$  & $1$ & $11,899$ & $39,809$ & $83,711$ & $143,610$ & $219,490$\\
    $c_i$  & $1$ & $3,981$ & $7,981$ & $11,981$ & $15,981$ & $19,981$\\
\botrule
\end{tabular}
\end{table}

By running \Cref{alg:3}, we get $m = 16$. There are two types of growth rate of the computational complexity, one is in exponential way that is upper bounded by $2^{n-2}$, and the other is in polynomial way that is less than $\frac{m^2}{n}\binom{n}{m}$. To show the efficient computability scheme, we select the first $2m$, $10m$, $\cdots$, $60m$ airplanes to compose an ARP instance respectively, and calculate the index number $\widetilde{m}$ by \Cref{alg:4}. Numerical results are displayed in \Cref{tab:7}.

\begin{table}[htbp]
\caption{Comparison between different $Q_n$.}
	\label{tab:7}       
\begin{tabular}{@{}lllll@{}}
\toprule
     $n$   &   $\widetilde{m}$  &  $\frac{\widetilde{m}^2}{n}\binom{n}{\widetilde{m}}$  &  $\frac{m^2}{n}\binom{n}{m}$  &  $2^{n-2}$   \\
\midrule
			  $32$   & $13$ &  $1.83 \times 10^{9}$  &  $4.81 \times 10^{9}$ & $1.07 \times 10^{9}$\\
              $160$   & $10$ & $1.42 \times 10^{15}$  &  $6.50 \times 10^{21}$ & $3.65 \times 10^{47}$\\
  	          $320$   & $10$ & $8.41\times 10^{17}$  &  $3.16 \times 10^{26}$  &  $5.34 \times 10^{95}$\\
              $480$   & $10$ &  $3.39 \times 10^{19}$  &  $1.57 \times 10^{29}$  &  $7.80 \times 10^{143}$\\
              $640$   & $10$ &  $4.63 \times 10^{20}$  &  $1.25 \times 10^{31}$  &  $1.14 \times 10^{192}$\\
              $800$   & $10$ & $3.50 \times 10^{21}$  &  $3.70 \times 10^{32}$  &  $1.67 \times 10^{240}$\\
  			  $960$   & $10$ & $1.82 \times 10^{22}$  &  $5.85 \times 10^{33}$  &  $2.44 \times 10^{288}$\\
             $1,000$  & $10$ &  $2.63 \times 10^{22}$  &  $1.08 \times 10^{34}$  &  $2.68 \times 10^{300}$\\
\botrule
		\end{tabular}
\end{table}

A further explanation of \Cref{thm:6} is that given a relatively large set of airplanes, if we have found the index $\bar{m}$, then for any $n$-airplane refueling instance drawn from the set of airplanes, the inflection point must be less than $2\bar{m}$, and the computational complexity is less than $\frac{\bar{m}^2}{n}\binom{n}{\bar{m}}$. For example, we choose $500$ airplanes from the $1000$ airplanes, and we get $m = 10$ (which is less than $\bar{m} = 16$). The computational complexity comparison results are presented in \Cref{tab:8}.

\begin{table}[htbp]
\footnotesize
\caption{Complexity comparison results}
	\label{tab:8}       
\begin{tabular}{@{}ll@{}}
\toprule
			$n = 500$ &  upper bound of $Q_n$  \\
\midrule
			  $Q_n \leqslant 2^{n-2}$  & $8.18 \times 10^{149}$    \\
                                       &                           \\
		      $Q_n < \frac{m^2}{n}\binom{n}{m}$ for $m = 16$  & $2.93 \times 10^{29}$    \\
                                     &                           \\
	          $Q_n < \frac{m^2}{n}\binom{n}{m}$ for $m = 10$  & $4.92 \times 10^{19}$    \\
                                     &                           \\
   	          $\widetilde{Q_n} < \frac{\widetilde{m}^2}{n}\binom{n}{\widetilde{m}}$ for $\widetilde{m} = 6$  & $1.17 \times 10^{16}$    \\
\botrule
		\end{tabular}
\end{table}

\section{Conclusion}

We proposed SSA to solve the ARP instances, and posed an efficient computability scheme. We found that the computational complexity of SSA running on ARP will decline from an exponential level to a polynomial level when its input size of $n$ is greater than an inflection point $N$. For each ARP instance whose airplane's single flight distance is located in an interval $[L, S]$, there must exist an inflection point $N$ such that SSA is a polynomial-time algorithm when the input size of $n$ is greater than an inflection point $N$. The upper bound of inflection point is determined by the interval $[L, S]$, and it dose not depend on the input size of $n$. Moreover, we proved that SSA is a polynomial-time algorithm to solve ARP. At last we constructed an efficient computability scheme to forecast in polynomial time the particular running time of SSA running on any given ARP instance, according to which we could provide computational strategy for decision makers and algorithm users.

%
%
%
%
%
%
%
%
%
%
%

\bibliography{references}

\end{document}